\definecolor{bblue}{rgb}{0,.1,.6}
\newtheorem{proposition}{Proposition}
\newtheorem{theorem}{Theorem}
\newtheorem{requirement}{Requirement}
\newcommand{\R}{\mathbb R}
\newcommand{\N}{\mathbb N}
\newcommand{\Rplus}{\R_{\ge 0}}
\newcommand{\SP}{\mathcal{SP}}
\newcommand{\cL}{\mathcal L}
\newcommand{\cM}{\mathcal M}
\newcommand{\cZ}{\mathcal Z}
\newcommand{\cK}{\mathcal K}
\newcommand{\cH}{\mathcal H}
\newcommand{\cJ}{\mathcal J}
\newcommand{\jacobian}{{\rm D}}
\newcommand{\x}{\times}
\newcommand{\inv}{^{-1}}
\newcommand{\st}{\,:\,}
\newcommand{\setto}{\rightrightarrows}
\newcommand{\vep}{\varepsilon}
\newcommand{\vhi}{\varphi}
\DeclareMathOperator{\col}{col}
\DeclareMathOperator{\blkdiag}{blkdiag}
\DeclareMathOperator{\argminop}{argmin}
\newcommand{\argmin}[1]{\underset{#1}{\argminop}\,}
\newcommand{\sr}{^\star}
\newcommand{\fid}{\mu}
\newcommand{\id}{z}
\newcommand{\ID}{\cZ}
\newcommand{\vthb}{\vartheta}
\newcommand{\dst}{\delta}
\definecolor{boh}{rgb}{.4,0,.4} 
\newcounter{assc}
\newenvironment{ass}{%
	\unskip\par\addvspace{.5em}\noindent%
	\refstepcounter{assc}%
	\textbf{A\theassc)}\itshape%
}{\par\addvspace{.5em}\ignorespaces} %
\newcommand{\e}{{\rm e}}
\newcommand{\gravity}{\varrho}
\def\be{\begin{equation}}%
\def\ee{\end{equation}}%
\def\ba{\begin{array}}%
	\def\ea{\end{array}}%
\newcommand{\vsig}{\varsigma}
\newcommand{\dx}{{n_x}}
\newcommand{\du}{{n_u}}
\newcommand{\dy}{{n_y}}
\newcommand{\dw}{{n_w}}
\newcommand{\de}{{n_e}}
\newcommand{\dxo}{{n_0}}
\newcommand{\dchi}{{n_\chi}}
\newcommand{\dchii}{{n_\chi^i}}
\newcommand{\ddchi}{n_\chi}
\newcommand{\dth}{{n_\theta}}
\newcommand{\xx}{{\rm p}}
\begin{document}

	\onecolumn 
	\vspace{4em}
	\begin{quote}
		This is the post peer-review accepted manuscript of: M. Bin and L. Marconi, ````Class-Type'' Identification-Based Internal Models in Multivariable Nonlinear Output Regulation,'' accepted for publication in IEEE Transaction on Automatic Control. The published version is available online at: https://doi.org/10.1109/TAC.2019.2955668
	\end{quote}

\vspace{5em}
\begin{quote}
	\emph{\textcopyright{}~2020 IEEE.  Personal use of this material is permitted.  Permission from IEEE must be obtained for all other uses, in any current or future media, including reprinting/republishing this material for advertising or promotional purposes, creating new collective works, for resale or redistribution to servers or lists, or reuse of any copyrighted component of this work in other works.}
\end{quote}

	\twocolumn

	\title{``Class-type'' Identification-Based Internal Models\\in  Multivariable Nonlinear Output Regulation
	}
	\author{Michelangelo~Bin and  Lorenzo~Marconi%
		\thanks{Michelangelo Bin (m.bin@imperial.ac.uk) is with the Department of Electrical and Electronic Engineering, Imperial College London, UK. Lorenzo Marconi (lorenzo.marconi@unibo.it) is with the Department of Electrical, Electronic, and Information Engineering, University of Bologna, Italy.}%
		 
	}%
	
	\maketitle

	\begin{abstract}
	The paper deals with the problem of output regulation in a ``non-equilibrium'' context for a special class of multivariable nonlinear systems stabilizable by high-gain feedback.  A post-processing internal model design suitable for the multivariable nature of the system, which might have more inputs than regulation errors, is proposed. Uncertainties in the system and exosystem are dealt with by assuming that the ideal steady state input belongs to a certain ``class of signals" by which an appropriate model set for the internal model can be derived. The adaptation mechanism for the internal model  is then cast as an identification problem and a least square solution is specifically developed.  In line with recent developments in the field,  the vision that emerges from the paper is that approximate, possibly asymptotic, regulation is the appropriate way of approaching the problem in a multivariable and uncertain context. New insights about the use of identification tools in the design of adaptive internal models are also presented.  
	  \end{abstract}

	\section{Introduction}\label{sec:intro}
We consider  nonlinear systems of the form  
\begin{equation}\label{fmk:sys:plant}
	\begin{aligned}
			\dot x &= f(w,x,u),&  
			y&= h(w,x), & e&=h_e(w,x)
	\end{aligned}
\end{equation}
	with state $x\in\R^\dx$, control input $u\in\R^\du$, measured outputs $y\in\R^\dy$, ``regulation error'' $e\in\R^\de$, and with $w\in\R^{\dw}$ an exogenous signal generated by the ``exosystem"   
\begin{equation}\label{fmk:sys:exo}  
			\dot w = s(w)\,.
\end{equation}
 The problem of \emph{approximate} output regulation pertains the design of an output feedback regulator of the form
	\[
	\begin{aligned}
	 \dot x_c = f_c(x_c, y)\,,\quad 
	u =k_c(x_c,y)
	 \end{aligned}
	\]
achieving the regulation objective
$\limsup_{t \to \infty}|e(t)|\le \epsilon$, 
with $\epsilon\ge 0$ possibly a ``small'' number measuring the regulator's asymptotic performance. If $\epsilon=0$, then the regulator is said to achieve \emph{asymptotic regulation}. If $\epsilon$ can be reduced arbitrarily by opportunely tuning the regulator parameters, the regulator is said to achieve \emph{practical regulation}.
 If the regulation properties are obtained in spite of possible uncertainties in the system  (\ref{fmk:sys:plant}), the problem is referred to as {\em robust output regulation} \cite{Bin2018c}, while the terminology {\em adaptive output regulation} is typically used  in presence of uncertainties in the exosystem (\ref{fmk:sys:exo}).
 %
An anchor point in the solution of the problem is represented by the steady-state trajectories $(x^\star(t), u^\star(t))$ solution of the so-called {\em regulator equations} 
	\begin{equation} \label{RegEqNL}
	   \dot w =s(w)\,, \quad \dot x^\star =  f(w, x^\star, u^\star)\,,\quad
	  0 = h_e(w,x^\star)\,,
	\end{equation}
with $x^\star$ representing the ideal state trajectory associated with a zero regulation error and $u^\star$ the associated input (often referred to as ``the friend" of $x^\star$). As shown in \cite{Byrnes2003}, indeed, solvability of (\ref{RegEqNL}) is a necessary condition for the problem at hand.  

Regulator structures proposed in the nonlinear context  are typically composed by two units, an {\em internal model unit} and a {\it stabilizing unit}, with a neat, albeit limiting in many contexts, ``role" conferred on the two at the design stage: the former is designed to generate the steady state input  $u^\star(t)$ required to keep the error at zero in steady state,  while the latter is designed to steer the system trajectories to $x^\star(t)$. What makes the design problem particularly challenging is, of course, the fact that $(x^\star, u^\star)$ are unknown as the initial conditions of  (\ref{RegEqNL}) and \eqref{fmk:sys:exo} are such and, in the robust/adaptive case, uncertainties in (\ref{fmk:sys:plant}) and/or (\ref{fmk:sys:exo}) strongly affect the solution of  (\ref{RegEqNL}).  The majority of the current works on the subject have some limiting aspects that is worth pointing out to better frame the contribution of this paper. 

{\em Non-equilibrium context.} Current frameworks typically assume that the solutions of (\ref{RegEqNL}) depend on time through $w(t)$, namely $(x^\star, u^\star) = (\pi(w), c(w))$ for some $\pi$ and $c$. Moreover, further restrictions are usually imposed limiting the class of friends that can be dealt with, as for instance the so-called ``immersion assumption" (the latter even more weakened over the years, see \cite{Huang1994}, \cite{Byrnes2004}, \cite{DelliPriscoli2006}, \cite{Marconi2007}). This assumption, far to be necessary, leads to design principles of the internal model unit just driven by the exosystem dynamics and some appropriate ``distortions"  that, however,  do not completely capture the full nonlinear context. A formal framework to overcome this limitation was given in \cite{Byrnes2003}, where a ``non-equilibrium theory'' for nonlinear output regulation was laid, by asserting that the internal model is in general required to incorporate a mixture of the residual plant's and exosystem's dynamics, in this way making meaningless the distinction between the plant and the exosystem from a  design viewpoint (and thus between robust and adaptive output regulation). 

{\em ``Friend-centric" internal models.} Many of the existing regulators are strongly ``friend-centric", namely   the design of the internal model unit is definitely tailored around the specific $u^\star$ resulting from the regulator equations. This, in turn, leads to fragile designs in which unexpected variations of the system/exosystem easily lead to ineffective regulators with unpredictable asymptotic properties. Uncertainties in the system/exosystem are typically handled by parametrising the  internal model in terms of uncertain parameters and by looking for ``adaptive" mechanisms according to the actual regulation error (see e.g. \cite{Serrani2001,DelliPriscoli2006}). This way of proceeding, however, involves a ``quantitative" information about how the uncertainties reflect on the friend that are hard to assume, unless  substantially limiting the topology describing system/exosystem variations.  These difficulties pushed the authors of \cite{Bin2018c} to conjecture that asymptotic regulation in a general nonlinear and uncertain context is unachievable with finite dimensional regulators and to promote approaches looking for approximate regulators, which possibly become  asymptotic if certain fortunate conditions happen.  In general, how a ``qualitative" information about the friend can be transferred into the design of an internal model that behaves ``well" for a ``wide" range of system/exosystem variations is still an open point in literature. 

{\em Pre- versus post-processing schemes.}  A taxonomy recently introduced in the literature  regards the distinction between  {\em pre-processing} and {\em post-processing} internal models \cite{Isidori2012,Bin2017}. In the latter, the internal model unit directly processes the regulation error, while the stabilising unit stabilises the cascade of the system driving the internal model unit. In the former, conversely, the two units are somehow ``swapped", with the internal model directly generating the feedforward input and the stabiliser stabilising the cascade of the internal model unit driving the system.  
The regulator structures proposed so far are definitely biased on pre-processing solutions and, as such, limited to deal with single input-single error systems (i.e. $\du=\de=1$) or some ``square" extensions with $\du=\de$ (see, e.g., \cite{Wang2016}).  As observed in \cite{Isidori2012,Bin2017}, post-processing solutions seem more suited to handling general multivariable contexts with possibly $\du>\de$. The latter, in turn, are also more promising to handle contexts in which, besides the regulation errors, also extra measurements are available that do not necessarily  vanish at the steady state. Not surprisingly, the general regulator structure for linear systems is  post-processing \cite{Davison1976}. The drift towards post-processing solutions for nonlinear systems, however, substantially complicates the design of the nonlinear regulator by raising an intertwining  in the design of the internal model and  stabiliser (referred to as {\em chicken-egg dilemma} in \cite{Bin2018b}) not present in pre-processing approaches. To the best knowledge of the authors, a general post-processing nonlinear framework is still unavailable in literature  with just some attempts done in \cite{DAstolfi2017} and   \cite{DAstolfi2015} for simplified exosystems. 

In this paper we propose a design technique based on the aforementioned non-equilibrium context, in which the effects of the system and exosystem dynamics on the steady state are jointly considered in the design of the internal model. The proposed regulator embeds a ``post-processing'' internal model that applies to multivariable systems not necessarily square, and whose construction is not ``friend-centric"  but rather it is based on a  ``qualitative" information on the ideal error-zeroing steady state. 

      \section{Main Result}
       \subsection{The class of systems}
       We consider a subclass of systems \eqref{fmk:sys:plant} with state $x=\col(x_0,\chi,\zeta)\in \R^\dx$ satisfying the following equations
\begin{subequations} \label{normal_form}
	\begin{align}
	\dot x_0 &= f_0(w,x) + b(w,x)u\label{nf:plant}\\
	\dot{\chi} &= F \chi + H\zeta \label{nf:partial_chi}\\
	\dot{ \zeta} &= q(w,x) + \Omega(w,x) u\label{nf:partial_zeta}   \\
	e  &=   C\chi \,,
	\qquad
	y=\mbox{col}(\chi, \zeta),  
	\end{align}
\end{subequations}
in which $x_0 \in \R^{\dxo}$, $y\in\R^\dy$, $e\in\R^\de$, $\zeta\in\R^\de$, $u\in\R^\du$, with $\du\ge\de$,   $\chi=\col(\chi^1,\dots,\chi^\de)$, with $\chi^i\in\R^{\dchii}$, $i=1,\dots,\de$, and {$\ddchi^1+\cdots+\ddchi^\de=:\dchi$}, $C:=\blkdiag(C_1,\dots,C_\de)$, $F:=\blkdiag(F_1,\dots, F_\de)$ and $H:=\blkdiag(H_1,\dots,H_\de)$, with $C_i  := \big(
1 \;\;  0_{1\x (\dchii-1)}\big)$ and
\begin{align*}
F_{i} &:= \begin{pmatrix}
0_{(\dchii-1)\x 1} & I_{\dchii-1}\\
0 & 0_{1\x (\dchii-1)}
\end{pmatrix}, &  H_{i}  &:= \begin{pmatrix}
0_{(\dchii-1)\x 1}\\1
\end{pmatrix} .
\end{align*}
The $\chi$ subsystem, in particular, is  described by $\de$ chains of integrators with $\zeta$ entering at the bottom and the regulation error given by the  first components $\chi^i_1$ of each chain $\chi^i$. Hence, $\chi$ and $\zeta$ are linear combinations of the error and its time derivatives. 
The functions $f_0$, $b$, $q$ and $\Omega$ are sufficiently smooth functions, with $\Omega(w,x)\in\R^{\de\x \du}$ denoting the so-called ``high-frequency matrix''. 
The form (\ref{normal_form}) is  representative of different frameworks addressed in literature. For instance, systems having a well-defined vector relative degree with respect to the input-output pair $(u,e)$ and admitting a {\em canonical normal form} fit in the proposed 
framework. In this case the $x_0$ dynamics in (\ref{normal_form}) does not depend on $u$ and it represents the {\em zero dynamics} of the system relative to the indicated input-output pair. On the other hand  (\ref{normal_form}), with a slightly different structure of $\chi$ and of the matrices  $F$ and $H$, is also representative of systems  that are ``just''  (globally) strongly invertible in the sense of \cite{Hirschorn1979,Singh1981} and  feedback linearisable with respect to the input-output pair $(u,e)$ and, as such, can be transformed in {\em partial normal form}, see \cite{Wang2015}.  In this case the dynamics (\ref{nf:partial_chi})-(\ref{nf:partial_zeta}) are the partial normal form of the system  and the subsystem (\ref{nf:plant}) is indeed the whole plant (i.e. $x=x_0$).


We observe that the measurable outputs $y$ are assumed to be linear combinations of the error and its time derivatives,  namely we look  for a {\em partial state feedback} solution. A pure error feedback regulator only processing $e$ can be obtained by replacing the time derivatives with appropriate estimates via  standard high-gain techniques  (see \cite{Teel1995}) whose details are not  presented here.

In the rest of the paper we assume the following.
\begin{ass}\label{ass:regulationEq}
	There exist $\beta_0 \in {\cal KL}$, $\alpha_0>0$ and, for each solution $w$ of (\ref{fmk:sys:exo}), each input $u$, and each   solution $x$ of (\ref{normal_form}) corresponding to $(w,u)$,  there exist  $x^\star_0:\Rplus\to\R^\dxo$ and $u^\star:\Rplus\to\R^\du$ fulfilling 
	\begin{equation}\label{main:eq:regulator_eq}
	\begin{aligned}
	\dot{x}_0^\star &= f_0(w,x^\star)+b(w,x^\star) u^\star\\
	0 &= q(w,x^\star) + \Omega(w,x^\star) u^\star 
	\end{aligned}
	\end{equation} 
in which $x^\star:=(x_0^\star,0,0)$, and
	\begin{equation*}\label{fmk:eq:detectability_e}
	|x_0(t)-x^\star_0(t)| \le \beta_0\big(|x_0(0)-x_0^\star(0)|,t\big) + \alpha_0 |(\chi,\zeta)|_{[0,t)}
	\end{equation*}
	for all $t\ge 0$.  
\end{ass}
\begin{ass}\label{ass:stabilisability}
	There exists a  full-rank matrix $\cL\in\R^{\du \x \de}$ such that the (square) matrix $\Omega(w,x)\cL$ is bounded, it satisfies
	\begin{equation*}
	\cL^\top \Omega (w,x)^\top+\Omega(w,x)\cL\ge  I_{\de} 
	\end{equation*}
	for all $(w,x) \in \R^\dw \times \R^\dx$,  and  the map $(\Omega(\cdot)\cL)\inv q(\cdot)$ is Lipschitz.
\end{ass}
Equations (\ref{main:eq:regulator_eq}) are the specialisation of the regulator equations (\ref{RegEqNL}) in this non-equilibrium  context.  The steady state $(x^\star, u^\star)$ might be dependent on the initial conditions of the system coherently with \cite{Byrnes2003}.  
Condition   A\ref{fmk:eq:detectability_e} asks for {\em uniform (in $u$) detectability} of the ideal steady state $x\sr$
(see \cite{Liberzon2002}). In case of systems with canonical normal form in which (\ref{nf:plant}) does not depend on $u$, this assumption boils down to a conventional {\em minimum-phase} requirement, far to be necessary  although
typically assumed  in the pertinent literature. 
A\ref{ass:stabilisability}, instead, is a robust stabilisability requirement and it implies that $\Omega(w,x)$ is everywhere full  rank. As a consequence,   $u^\star$ in (\ref{main:eq:regulator_eq}) is given by
\[
 u^\star=- \Omega (w, x^\star)^\top \big( \Omega(w, x^\star) \Omega(w, x^\star)^\top \big)^{-1} q(w, x^\star)\,.
\] 
As clear from A\ref{ass:regulationEq} and A\ref{ass:stabilisability}, we deal with a simplified case in which  a global result is sought under quite restrictive global Lipschitz and boundedness conditions. 
Nevertheless, we remark that the proposed result can be extended to a {\em semiglobal} setting by asking  $w$ to evolve in a compact space  and A\ref{ass:regulationEq} and A\ref{ass:stabilisability} to hold only locally (namely on each compact subset of $\R^\dx$). 
For reason of space and  since the extension follows by well-known arguments (see e.g. \cite{IsidoriNCS2,Teel1995}) we omit this extension and we focus on the new adaptive framework.
       \subsection{The regulator structure}
     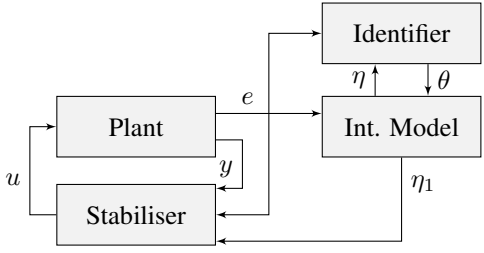
\begin{figure}[t]
     	\centering
     	\tikzstyle{sys} = [draw,inner sep=8,minimum width=6em,fill=black!5!white]
     	\tikzstyle{clk} = [circle,draw,inner sep=2]
     	\tikzstyle{line} = [draw, -latex']
     	\begin{tikzpicture}
     	\node[sys] (plant) at (0,0) {Plant};
     	\node[sys] (intm) [right=4em of plant.east] {Int. Model};
     	\node[sys] (id) [above=1.2em of intm.north] {Identifier};
     	\node[sys] (stab) [below=1em of plant.south] {Stabiliser};
     	\path[line]  ($(plant.east)+(0,.5em)$) -- ($(intm.west)+(0,+.5em)$) node [above,pos=.3] {$e$};
     	\path[line] ($(plant.east)+(2em,0.5em)$)|-(id.west);
     	\path[line] ($(plant.east)+(0,-.5em)$)  -- ($(plant.east)+(1em,-.5em)$)  |- ($(stab.east)+(0,1em)$) node[left,pos=.3] {$y$} ;
     	\path[line] ($(plant.east)+(2em,.5em)$) |- ($(stab.east)+(0,0em)$);
     	\path[line] (intm.south) node [right,yshift=-1em] {$\eta_1$} |-   ($(stab.east)+(0,-1em)$)  ;
     	\path[line] (stab.west) -- ($(stab.west)-(1em,0)$) |- (plant.west) node[midway,left,yshift=-2em] {$u$};
     	\path[line] ($(intm.north)+(-1em,0)$) -- ($(id.south)+(-1em,0)$) node[midway,left] {$\eta$};
     	\path[line] ($(id.south)+(1em,0)$) -- ($(intm.north)+(1em,0)$) node[midway,right] {$\theta$};
     	\end{tikzpicture}
     	\caption{Block-diagram of the regulator.}
     	\label{fig:ctrl}\vspace{-1em}
     \end{figure}
       The proposed regulator structure is depicted in Figure \ref{fig:ctrl}. The  post-processing internal model unit has the form
\begin{equation}\label{NLinternalModel}
	\dot \eta = \Phi(\eta,\theta)+ G e, \qquad \eta \in \R^{d\de}
\end{equation}
with $d \in\N$,  $\eta=(\eta_1,\dots,\eta_d)$,  $\eta_i\in\R^\de$, and
\[
\Phi( \eta,\theta) = \left( \ba{ccccc} \eta_2\\  
\cdots \\ \eta_{d}\\ \psi(\eta, \theta)
\ea \right ), \; G = \left( \ba{c} g h_1  I_\de \\g^2 h_2  I_\de \\ \cdots \\g^d h_d I_\de \ea \right ),
\]
 in which $h_i$, $i =1, \ldots,d$, are fixed so that the polynomial $s^d+h_1s^{d-1}+\cdots+h_{d-1}s + h_d$ is Hurwitz, $g>0$ is a  parameter  to be designed, $\psi : \R^{d\de} \times \R^\dth \to \R^{\de}$ is a function   to be fixed, and $\theta\in\R^\dth$,  $\dth\in\N$, is an ``adaptive'' parameter  generated by the {\em identifier} subsystem, whose dynamics is described by
\begin{equation}\label{sys:identifier} 
\begin{aligned}
\dot \id &= \fid(\id,\eta,e), \qquad \theta = \omega(\id), 
\end{aligned}
\end{equation}
in which  $\fid:\ID\x\R^{d\de}\x\R^\de\to\ID$ and $\omega: \ID \to \R^\dth$, with $\ID$ a normed vector space of finite dimension, have to be fixed.  Finally, the (static) stabiliser is taken as  
\be \label{stabiliser}
\ba{rcl}
u &=& \cL \big(   K_\chi  \chi + K_\zeta \zeta + K_\eta \eta_1 + K_w \nu(x^\star,w)    \big),
\ea
\ee
in which the matrices $K_\chi$, $K_\zeta$ and $K_\eta$ are chosen as follows
\begin{align*}
 K_\chi (\ell, \kappa) =\ell K(\kappa),\;\; K_\zeta(\ell) = - \ell I_\de, \;\; 
K_\eta (\ell, \kappa) = \ell K(\kappa) C^\top 
\end{align*} 
with $K(\kappa) = \blkdiag(K^1(\kappa),\dots, K^\de(\kappa))$, where
\begin{equation}\label{d:K}
K^i(\kappa) =-\begin{pmatrix}
c^i_1 \,\kappa^{\dchii} & c_2^{i} \, \kappa^{\dchii-1} & \ldots & c^i_{\dchii} \, \kappa  
\end{pmatrix} 
\end{equation}
for $i=1,\ldots, \de$, in which the coefficients $c^i_j$ are chosen so that the polynomials $s^{\dchii}+c^i_{{\dchii}}s^{\dchii-1}+\cdots+c^i_{2}s + c^i_1$, $i=1,\ldots, \de$, are Hurwitz, and $\ell,\,\kappa>0$ are design parameters to be fixed. The  matrix $K_w$ and the function $\nu$ are introduced for sake of generality and are possibly  zero. These  terms could represent a ``feedforward" contribution added by the designer by employing possible knowledge of $w$ and $x^\star$. Likewise, it could represent a term showing up in the normal form (\ref{normal_form}) after a preliminary feedback of available measurements  that do not vanish in steady state. Similarly to the other matrices in (\ref{stabiliser}),   the gain matrix $K_w$ can depend on $\kappa$ and $\ell$.
%
The  degrees of freedom left to be fixed at this stage are  the dimension $d$ and function $\psi$ of the internal model unit \eqref{NLinternalModel}, the data $(\cZ,\dth,\mu,\omega)$ of the identifier (\ref{sys:identifier}), and the control parameters $g,\, \ell$ and $\kappa$.     
        \subsection{Design of the internal model as prediction model}\label{sec:designIM}
       A key step in the regulator synthesis is the choice of the internal model \eqref{NLinternalModel} and of its adaptation through the design of the identifier \eqref{sys:identifier}. Consistently with the discussion in Section \ref{sec:intro},  this must be done to achieve a small, possibly zero, asymptotic regulation error in spite of uncertainties involving $(x^\star,u^\star)$ and the underlying dynamics. 
  With an eye to the last equation of (\ref{NLinternalModel}), we can write  
      \be \label{predictionmodel}
       e(t) = \bar c(g)\,  \big (  \dot \eta_d(t) - \psi(\eta(t), \theta(t))  \big )
      \ee
      in which  $\bar c(g) := (h_d g^d)\inv $. Our design strategy to choose $(d,\psi)$ in \eqref{NLinternalModel} and  the identifier \eqref{sys:identifier} pivots around the idea that  $\dot \eta_d(t) - \psi(\eta(t), \theta(t))$ can be interpreted as a ``prediction error'' attained by the ``model'' $\psi$ in relating the ``next derivative'' $\dot\eta_d (t)$ to the ``previous derivatives" $\eta (t)$, and that, by minimising this prediction error, the actual regulation error is also minimised due to (\ref{predictionmodel}). This clearly suggests to look at the problem of choosing $d$ and $\psi$ as an {\em identification problem} and,  by borrowing the notation typically adopted in that  literature \cite{Ljung1999},  to refer to the map  $\psi(\cdot,\theta)$
as the \textit{prediction model} relating the ``input data'' $\eta$ to the ``output'' $\dot \eta_d$, and to the set $\cM:=\{ \psi(\cdot,\theta)\st \theta\in\R^\dth\}$ of all the possible candidate models as the corresponding {\em model set}. The choice of $d$ and of $\psi$ thus must be done in such a way that the attainable prediction error is minimised.  
 Unless relying on  ``universal" infinite-dimensional models, however,  this selection  must be grounded on some  preliminary knowledge about the  \emph{class of signals} to which $\dot \eta_d$ and $\eta$ are expected to belong.  In this context, the steady-state signals $(x^\star, u^\star)$ resulting  from the regulator equations (\ref{main:eq:regulator_eq}) are the anchor point on which that knowledge can be drawn. 
In particular, let  
\begin{equation*}
\eta_1^\star := \Upsilon_{(\ell, \kappa)}(w, x^\star),
\end{equation*}     
in which
   \begin{equation*}
   \begin{aligned}
   &\Upsilon_{(\ell, \kappa)}(w, x^\star) :=\\
   & - \big ( \Omega(w,x^\star) \cL K_\eta \big )^{-1} \big ( \, q(w,x^\star) +   \Omega(w,x^\star) \cL K_\omega \nu(w,x^\star)\big ),
   \end{aligned}
   \end{equation*}
and define recursively $\eta^\star_i$, $i=2,\ldots, d+1$, as\footnote{We denote by $L_g f$ the Lie derivative of $f$ along $g$.}
       \[
         \eta_i^\star := L^{i-1}_{s(w)}\Upsilon_{(\ell, \kappa)}(w, x^\star) + 
         L^{i-1}_{f_0(w,x) +b(w,x)u^\star}\Upsilon_{(\ell, \kappa)}(w, x^\star) .
       \]
       Finally let
       \begin{equation*}
       \dot\eta_d\sr := \eta\sr_{d+1}.
       \end{equation*}
     In view of A\ref{ass:stabilisability} and the definition of  $K_\eta$,  the matrix $\Omega(w,x) \cL K_\eta$ is everywhere invertible and, thus, all the previous quantities are well-defined. Moreover, we observe that the quantities $\eta^\star_i$, $i =1,\ldots, d+1$, depend on the design parameters $\kappa$ and $\ell$ yet to be fixed. The dimension $d$ and the function $\psi$ should be then ideally chosen so that, with $\eta^\star= \col(\eta_1^\star, \ldots, \eta_d^\star)$, the following holds 
     \be\label{idealpsi}
      \dot \eta_d^\star(t) = \psi(\eta^\star(t), \theta^\star(t) ),
     \ee
  for some ``ideal'' $\theta^\star(t)\in\R^\dth$. This, in fact, would make $(x^\star, \eta^\star)$ a trajectory of the closed-loop system in which the associated  regulation error is identically zero. The design of  the pair $(d,\psi)$ so that (\ref{idealpsi}) is fulfilled for all possible steady-state trajectories $(\dot \eta^\star_d, \eta^\star)$, however, is not realistic unless limiting even further the class of treatable nonlinear systems and of manageable  uncertainties on the solution of (\ref{main:eq:regulator_eq}).  Furthermore,  even in the fortunate case in which the ideal relation  \eqref{idealpsi} could be fulfilled with a perfect parametrisation (maybe playing with large values of $d$), this might require an unacceptable complexity of the internal model, and an approximated model with a possibly lower $d$ would be preferable. Along this direction, we rather assume that the designer has a {\em qualitative} knowledge about a ``class" $\cH\sr$ of signals\footnote{Formally, $\cH\sr$ is a subset of the space of functions $\Rplus\to \R^{\de}\x\R^{d\de}$.} to which $(\dot \eta^\star_d, \eta^\star)$ belongs in order to fix a model set $\cal M$ necessarily approximated but optimised for the specific class.  This is the ``modelling part", in which the ``touch" of the designer and the  knowledge on the steady-state trajectories come into play. The class  $\cH\sr$, in turn, is fixed on the basis of the knowledge on the nominal solution $(x\sr,u\sr)$ to \eqref{main:eq:regulator_eq}, and after considering all the expected system/exosystem uncertainties that may affect it. The problem of handling the overall uncertainty on $(x\sr,u\sr)$ is thus transferred to the adaptation side, and the idea of relying on system identification techniques for it is further motivated by the fact that, typically, identification methods structurally manage large classes of signals \cite{Ljung1999}. 
  From now on we suppose that the designer has fixed a class $\cH\sr$  and, accordingly, a model set $\cM$,  so that   the following  assumption holds.
  \begin{ass}\label{ass:psiLipschitz}
	The map $\psi$ is Lipschitz and differentiable with a locally Lipschitz derivative, and the Lipschitz constants do not dependent on $\kappa$ and $\ell$. Moreover,  there exists a compact set $H\sr\subset\R^{\de}\x\R^{d\de}$, independent on $\kappa$ and $\ell$,  such that every $(\dot\eta\sr_d,\eta\sr)\in\cH\sr$ satisfies $(\dot\eta\sr_d(t),\eta\sr(t))\in H\sr$ for all $t\in\Rplus$. 
\end{ass}
The previous assumption formalizes the  ``quantitative''  properties required to the members of the class $\cH\sr$ on  which the design of the internal model and the identifier is grounded. In particular, it is asked that the elements of $\cH\sr$ stay in a known compact set $H\sr$, and that the inferred prediction model $\psi$ has some strong regularity properties uniform in the control gains $(\kappa,\ell)$. These requirements, in principle not needed  in the design of the identifier and internal model, are rather needed for the successive embedding of the two units in the overall regulator, as they permit to break the ``chicken-egg dilemma'' and sequence the design of the remaining degrees of freedom.
	
	     
	We remark, moreover, that in the ``square'' case, namely when $\du=\de$ in \eqref{normal_form}, the matrix $\Omega(w,x)$ is square and $\kappa$ and $\ell$ do no mix-up with $\Omega(w,x)$, $q(w,x)$ and $\nu(w,x)$ in the  definition of   $\dot\eta_d^\star$ and $\eta^\star$. Therefore  $(\dot\eta\sr_d,\eta\sr)$ can be always bounded uniformly in $\kappa$ and $\ell$ whenever they  are taken larger than $1$ and $K_\omega/(\kappa\ell)$ can be bounded uniformly in $\kappa$ and $\ell$.
 
  \subsection{ The design of the identifier}\label{sec:DesignIdentifier}
With $d$ and $\psi$ fixed, we shift our attention to the design of the identifier. The fact that  (\ref{idealpsi}) is not attainable exactly suggests to define a {\it steady state prediction error} as 
\begin{equation}\label{d:varepsilon_star}
\varepsilon^\star(t, \theta) := \dot \eta^\star_{d}(t) - \psi(\eta^\star(t),\theta)
\end{equation}
 and to look for a dynamical system which is able to select the best parameter, say $\theta^\star$, whose corresponding model $\psi(\cdot,\theta^\star(t))$ is, at each $t$, the ``{best}'' model in $\cM$ relating $\dot \eta^\star(t)$ and $\eta^\star(t)$,  minimising in some sense $\varepsilon^\star$.
 As customary in system identification, the meaning of ``best'' in the model selection is based on the definition of a {\em fitness criteria} assigning to each model $\psi(\cdot,\theta)\in\cM$ a suitable and comparable value. In particular, with $C^0(\R^\dth,\Rplus)$ the space of continuous functions $\R^\dth\to\Rplus$, with each pair $(\dot\eta\sr_d,\eta^\star)\in\cH\sr$ we associate the map $\cJ_{(\dot\eta\sr_d,\eta^\star)}:\Rplus\to C^0(\R^\dth,\Rplus)$ given by
\begin{equation}\label{def:cJ}
\cJ_{(\dot\eta\sr_d,\eta^\star)}(t)(\theta) := \int_{0}^t c_\varepsilon\Big(t,s,|\varepsilon^\star(s,\theta)|\Big)ds + c_r(\theta),
\end{equation}
with $c_\varepsilon:\Rplus\x\Rplus\x\Rplus\to\Rplus$ and $c_r:\R^\dth\to\Rplus$ some user-defined positive functions characterising the particular underlying identification problem. More precisely, the integral term of \eqref{def:cJ} measures how well a given choice of $\theta$ fits the historical data, while $c_r(\theta)$ plays the role of a regularisation factor. With $\cJ_{(\dot\eta\sr_d,\eta^\star)}$ we associate the set-valued map $\vthb_{(\dot\eta\sr_d,\eta^\star)}^\circ:\Rplus\setto\R^\dth$ defined as  \begin{equation*}
\vthb_{(\dot\eta\sr_d,\eta^\star)}^\circ(t):=\argmin{\theta\in\R^\dth}  \cJ_{(\dot\eta\sr_d,\eta^\star)}(t) (\theta)\,.
\end{equation*}
Once a cost functional of the form \eqref{def:cJ} is defined, the identifier subsystem \eqref{sys:identifier} is constructed to guarantee the existence of an ``optimal'' steady state $z\sr$, which is robustly asymptotically stable for \eqref{sys:identifier}, and whose corresponding output $\theta\sr=\omega(z\sr)$ is a pointwise minimiser of $\cJ_{(\dot\eta\sr_d,\eta^\star)}(t)$, i.e. satisfies $\theta\sr(t)\in\vthb_{(\dot\eta\sr_d,\eta^\star)}^\circ(t)$ for all $t\ge 0$. In particular, the identifier \eqref{sys:identifier} is chosen as a system with state 
\begin{equation*}
z = \col(\xi,\varsigma),\qquad   \xi \in\R^{2\de},\ \vsig\in \cZ_\varsigma,
\end{equation*}
in which $\cZ_\varsigma$ is a finite-dimensional normed vector space, $\cZ=\R^{2\de}\x\cZ_\vsig$,   
and the pair $(\mu,\omega)$ is chosen so that, with $\xi_1,\,\xi_2\in\R^{\de}$ such that $\xi=\col(\xi_1,\xi_2)$, the equations \eqref{sys:identifier} read  as
\begin{equation}\label{s:xi_vsig}
\begin{array}{lcl}
\dot \xi_1 &=& \xi_2- m_1 \, \rho\, (\xi_1 - \eta_d)\\
\dot \xi_2 &=&    \dot \psi(\xi_2,\eta, \varsigma ) - m_2 \, \rho^2 \, (\xi_1 - \eta_d)\\
\dot \varsigma &=& \varphi(\varsigma, \xi_2, \eta )\\
\theta &=& \gamma(\varsigma)
\end{array}
\end{equation}
where  $m_1,\,m_2>0$ are arbitrary, $\rho>0$ is a design parameter, $\dot\psi:\R^\de\x\R^{d\de}\x\cZ_\varsigma\to\R^\de$ is a function fixed below, and  $(\vhi,\gamma)$ is chosen to satisfy the following requirement.        
\begin{requirement}[Identifier Requirement]\label{def:identifierRequirement}
The pair $(\vhi,\gamma)$ is said to satisfy the identifier requirement relative to a class $\cH\sr$ and a cost functional \eqref{def:cJ}, if  $\vhi$ is locally Lipschitz, $\gamma$ is Lipschitz and differentiable with locally Lipschitz derivative, and there exist $\beta_\varsigma \in\cK\cL$,  a compact set $S\sr\subset\cZ_\varsigma$, $\alpha_\varsigma>0$ and, for each  $(\dot\eta^\star_d,\eta\sr)\in\cH\sr$,   a unique $\varsigma^\star:\Rplus\to S\sr$, such that:
\begin{itemize}
	\item for every locally integrable $\delta': \R\to\R^{\de}$ and $\delta'': \R\to\R^{d\de}$, all the maximal solutions to the  system  $\dot \varsigma = \varphi(\varsigma, \dot \eta_d^\star + \delta',  \eta^\star + \delta'' )$ are complete and satisfy
	\begin{equation*}
	|\varsigma(t)-\varsigma^\star(t)|\le\beta_\varsigma(|\varsigma(0)-\varsigma^\star(0)|,t)+\alpha_\varsigma \, |(\delta', \delta'')|_{[0,t)}
	\end{equation*}
	for all $t\in\Rplus$; 
	
	\item the signal $\theta^\star(t):=\gamma(\varsigma^\star(t))$ satisfies $\theta^\star(t)\in\vthb_{(\dot\eta\sr_d,\eta^\star)}^\circ(t)$ for all $t\in\Rplus$.  
\end{itemize} 
\end{requirement}
With $H\sr$ and $S\sr$ the compact sets introduced, respectively, in A\ref{ass:psiLipschitz} and in the identifier requirement, and with $\jacobian\psi$ and $\jacobian\gamma$ denoting the Jacobian of $\psi$ and $\gamma$ respectively, we define  $\dot\psi$ as any bounded function satisfying 
\begin{equation}\label{dotpsi}
\dot \psi(\xi_2,\eta, \varsigma)=\jacobian\psi(\eta,\theta)  \col\big(\Phi(\eta, \gamma(\varsigma)),\,    \jacobian\gamma (\vsig) \varphi(\varsigma, \xi_2, \eta )\big)
\end{equation}
for all $(\xi_2,\eta,\vsig)\in H\sr\x S\sr$. With this construction, since under A\ref{ass:psiLipschitz} and the identifier requirement, $\psi$, $\jacobian\psi$, $\vhi$, $\gamma$ and $\jacobian\gamma$ are locally Lipschitz, and $\dot \psi$ is bounded,  there exists $l_\psi>0$ such that 
\begin{equation}
\label{cond:dotpsi}
|\dot\psi(\xi_2,\eta,\vsig)-\dot\psi(\dot\eta\sr_d,\eta\sr,\vsig\sr)|\le l_\psi|(\xi_2-\dot\eta\sr_d,\eta-\eta\sr,\vsig-\vsig\sr)|
\end{equation}
for all $(\xi_2,\eta,\vsig)\in\R^{\de}\x\R^{d\de}\x\cZ_\varsigma$ and  $(\dot\eta\sr_d,\eta\sr,\vsig\sr)\in H\sr\x S\sr$.
 
The identifier \eqref{s:xi_vsig} is thus composed of the two subsystems $\xi$  and  $\vsig$. The dynamics and output maps $(\vhi,\gamma)$ of $\vsig$ are designed to fulfil the identifier requirement. When driven by the ``ideal'' input pair $(\dot \eta^\star_d, \eta^\star)$, the subsystem $\vsig$ is supposed to have an attractive steady-state solution $\varsigma^\star$ along which its  output $\theta\sr$ leads to the best model in the model set $\cal M$ according to (\ref{def:cJ}). In addition, a robustness property, given in terms of input-to-state stability with respect to the additive inputs $(\delta', \delta'')$, is required. This additional property is needed  since $(\dot\eta\sr_d,\eta\sr)$ is not available for feedback and  in \eqref{s:xi_vsig} the system $\vsig$ is instead driven by the input   $(\xi_2,\eta)$, the latter playing the role of a ``proxy'' for $(\dot\eta\sr_d,\eta\sr)$. While it is clear that $\eta$ carries some information on $\eta\sr$, the fact that $\xi_2$ acts as a proxy of $\dot\eta\sr_d$ follows by the definition of $\xi$, which is indeed designed as a \emph{derivative observer} of the derivative $\dot\eta_d$ of $\eta_d$, providing the missing information on $\dot\eta_d\sr$.

We stress that the ability to construct an identifier satisfying the requirement as indicated above  hides the need of qualitative and quantitative knowledge on the ideal steady-state signals $\dot\eta_d\sr$, $\eta\sr$ and $\vsig\sr$, as evident for instance in the definition of $S\sr$ and $H\sr$. We remark, however, that this information concerns high-level properties of the class $\cH\sr$, such as a uniform bound on its elements, and not the precise knowledge of the actual $(\dot\eta\sr_d,\eta\sr)$.
%
%
%
In Section \ref{sec:leastsquares},  pair $(\vhi,\gamma)$ fulfilling the identifier requirement  when the model $\psi(\cdot,\theta)$ is linearly parametrised and (\ref{def:cJ}) is a least square functional is presented.

       \subsection{The asymptotic stability result}
     The overall regulator reads as follows
   \begin{equation}\label{s:regulator}
   \begin{array}{lcl}
   \dot \eta &=& \Phi(\eta,\gamma(\varsigma)) + Ge\\
   \dot \varsigma &=& \varphi(\varsigma,\xi_2,\eta)\\
   \dot \xi_1 &=& \xi_2 -m_1\rho(\xi_1-\eta_d)\\
   \dot \xi_2 &=& \dot\psi (\xi_2,\eta,\varsigma) - m_2\rho^2(\xi_1-\eta_d )  \\
   u &=& \cL \big(   K_\chi  \chi + K_\zeta \zeta + K_\eta \eta_1 + K_w \nu(x^\star,w)   \big)
   \end{array}
   \end{equation}    
   
      We finally show that the design parameters $(g,\ell, \kappa, \rho)$ can be chosen  so that the closed-loop system has an asymptotic regulation error that is bounded by a function of the best attainable prediction error. The result is precisely formulated in the following theorem. 
       
  \begin{theorem}\label{MainTheorem}
  	 Suppose that A\ref{ass:regulationEq} and A\ref{ass:stabilisability} hold, and consider the regulator \eqref{s:regulator} constructed in the previous sections with $\cH\sr$ and $\psi$ satisfying A\ref{ass:psiLipschitz} and $(\vhi,\gamma)$ fulfilling the identifier requirement relative to $\cH\sr$ and a cost functional \eqref{def:cJ}. Suppose moreover that $(\dot\eta\sr_d,\eta\sr)\in\cH\sr$ for all $\kappa>1$ and $\ell>1$. Then there   exist    $c,\,\rho^\star,\, g^\star(\rho),\, \kappa^\star(g),\, \ell^\star(g,\kappa)>0$ such that, for all $\rho\ge\rho\sr$, $g \geq g^\star(\rho)$, $\kappa \geq \kappa^\star(g)$ and $\ell \geq \ell^\star(g,\kappa)$, every solution of the closed-loop system \eqref{normal_form}, \eqref{s:regulator}   satisfies
  	\begin{equation*}
  	   \limsup_{t \to \infty}   |e(t)| \leq     \dfrac{c}{g^{d}}   \limsup_{t \to \infty}   |\varepsilon^\star(t, \theta^\star(t))|,
  	\end{equation*} 	
  	with $c$ not dependent on the control parameters.
\end{theorem}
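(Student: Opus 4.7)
The plan is to recast the closed-loop dynamics in error coordinates anchored at the ideal steady-state trajectory $(x_0^\star, 0, 0, \eta^\star, \varsigma^\star, \xi_1^\star, \xi_2^\star)$, with $\xi_1^\star := \eta_d^\star$ and $\xi_2^\star := \dot\eta_d^\star$, and then exploit a hierarchical high-gain argument with the four parameters tuned in the order $\rho, g, \kappa, \ell$ announced in the statement. I would first introduce the deviations $\tilde x_0 := x_0-x_0^\star$, $\tilde\varsigma := \varsigma-\varsigma^\star$, $\tilde\xi_i := \xi_i - \xi_i^\star$, and rescaled variants $\tilde\chi_j^i/\kappa^{j-1}$, $\tilde\zeta/\kappa^{\dchii}$, $\tilde\eta_i/g^{i-1}$ so that the high-gain stabiliser and the internal-model/observer take a singularly perturbed form. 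Plugging (\ref{predictionmodel}) and the identity $\dot\eta_d^\star = \psi(\eta^\star,\theta^\star) + \varepsilon^\star(\cdot,\theta^\star)$ into the $\tilde\eta$-dynamics shows that $\varepsilon^\star(\cdot,\theta^\star)$ enters only at the bottom of the $\eta$-chain, and hence contributes to $e$ attenuated by the factor $\bar c(g)=(h_d g^d)^{-1}$, which is the source of the $c/g^d$ prefactor in the bound.

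Once these coordinates are in place, I would carry out the stability analysis as a cascade. First, fixing $\rho$ sufficiently large makes the derivative-observer error $(\tilde\xi_1,\tilde\xi_2)$ exponentially small on a fast time scale, with an ISS gain from $(\tilde\eta,\tilde\varsigma)$ and from $\dot\eta_d-\dot\eta_d^\star$ that can be made arbitrarily small; this is a standard two-dimensional high-gain observer estimate using (\ref{cond:dotpsi}). Next, invoking the identifier requirement together with the ISS property with respect to the perturbations $(\delta',\delta'') = (\xi_2-\dot\eta_d^\star,\eta-\eta^\star)$, the $\tilde\varsigma$ subsystem inherits an ISS bound toward $\varsigma^\star$, so $\theta=\gamma(\varsigma)$ tracks $\theta^\star$ up to a term controlled by $|\tilde\xi_2|+|\tilde\eta|$. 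Then, for $g$ large, the internal-model coordinates $\tilde\eta$ satisfy a high-gain input-to-state estimate driven by $e$ and by $\psi(\eta,\theta)-\psi(\eta^\star,\theta^\star)$, which by A\ref{ass:psiLipschitz} is Lipschitz in $(\tilde\eta,\tilde\varsigma)$ with constants independent of $\kappa,\ell$. Finally, for $\kappa$ and then $\ell$ large, A\ref{ass:stabilisability} and the structure of $K_\chi,K_\zeta,K_\eta$ yield an ISS estimate for the $(\tilde\chi,\tilde\zeta)$ chain with gain from $\tilde\eta_1$ vanishing as $\kappa,\ell\to\infty$, and A\ref{ass:regulationEq} then propagates this to $\tilde x_0$ via the uniform detectability inequality.

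The asymptotic bound is obtained by combining these ISS estimates through a standard small-gain chain: because each gain in the loop can be made arbitrarily small by tuning the corresponding parameter, the only non-vanishing forcing is $\varepsilon^\star(\cdot,\theta^\star)$ acting on the bottom of the $\eta$-chain. Pushing this forcing through the internal-model transfer and through the high-gain stabiliser yields a steady-state bound on $e$ of the form $c\,\bar c(g)\,\limsup_t|\varepsilon^\star(t,\theta^\star(t))|$, with $c$ independent of $(\rho,g,\kappa,\ell)$ thanks to the $\kappa,\ell$-uniformity asserted in A\ref{ass:psiLipschitz}, in the identifier requirement, and in the hypothesis that $(\dot\eta_d^\star,\eta^\star)\in\cH^\star$ uniformly in $(\kappa,\ell)$.

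The main obstacle, and the reason the four gains must be tuned sequentially rather than jointly, is exactly the chicken-egg coupling highlighted in the introduction: the identifier is driven by $(\xi_2,\eta)$ while its ISS bound is only useful if the plant/internal model part is already operating near $(\dot\eta_d^\star,\eta^\star)$; conversely, the stabiliser needs $\theta$ to be close to $\theta^\star$ to keep $\tilde\eta_1$ from destabilising the $\chi$-chain. Breaking this loop requires verifying that the Lipschitz constants of $\psi$, $\gamma$, $\varphi$ and $\dot\psi$, together with the ISS gains of the identifier and of the derivative observer, are independent of $\kappa$ and $\ell$—this is precisely where A\ref{ass:psiLipschitz}, the uniformity clause in the identifier requirement, and the choice of $\dot\psi$ as a bounded function agreeing with (\ref{dotpsi}) on $H^\star\times S^\star$ are used—so that a genuine small-gain argument closes the loop and the resulting estimate on $e$ reduces to the single residual term $\varepsilon^\star(\cdot,\theta^\star)$.
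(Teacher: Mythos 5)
Your overall strategy---error coordinates centred on the ideal steady state, block-wise ISS estimates, and a small-gain closure with the gains tuned in the order $\rho,\,g,\,\kappa,\,\ell$---is the same as the paper's, but there is a genuine gap at the internal-model step. In the coordinates you propose ($\tilde\eta:=\eta-\eta^\star$ with $\chi$ and $\zeta$ left unshifted), the internal-model error dynamics are $\dot{\tilde\eta}_i=\tilde\eta_{i+1}+g^ih_i e$ for $i<d$ and $\dot{\tilde\eta}_d=\psi(\eta,\theta)-\psi(\eta^\star,\theta^\star)-\varepsilon^\star+g^dh_d e$: this is a chain of integrators whose only ``feedback'' is the merely Lipschitz difference of $\psi$, so it is not ISS with respect to $e$ (take $e=0$, $\varepsilon^\star=0$, $\psi\equiv 0$), and the ``high-gain input-to-state estimate driven by $e$'' you invoke does not exist for any $g$. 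The Hurwitz coefficients $h_i$ become effective only after the mixed change of variables used in the paper, $\tilde e:=e+\tilde\eta_1$, $\tilde\chi:=\chi+C^\top\tilde\eta_1$, $\tilde\zeta:=\zeta-K(\kappa)\tilde\chi$, which splits $Ge=G\tilde e-G\tilde\eta_1$ and turns the $\tilde\eta$ subsystem into the Hurwitz companion form \eqref{pf:teta}, ISS with respect to $(\tilde e,\tilde\varsigma,\varepsilon^\star)$ with the $g$-scalings of \eqref{pf:bound_eta}. That shift is exactly what breaks the chicken-egg loop you describe; without it the cascade/small-gain chain you outline cannot be closed, since one of its nodes is not even marginally stable.

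Two further points follow from this. First, your justification of the $c/g^{d}$ prefactor---``$\varepsilon^\star$ enters at the bottom of the chain and is attenuated by $\bar c(g)$'' via \eqref{predictionmodel}---is not a proof: in the paper's coordinates one actually obtains $\limsup_t|\tilde e(t)|\le p_2\,\kappa^{-1}g^{1-d}\limsup_t|\varepsilon^\star|$ and $\limsup_t|\tilde\eta_1(t)|\le p_3\,g^{-d}\limsup_t|\varepsilon^\star|$, and since $|e|\le|\tilde e|+|\tilde\eta_1|$ the $g^{-d}$ rate is recovered only because $\kappa^\star(g)$ is taken larger than $g$ so that $\kappa^{-1}\le g^{-1}$; both the splitting of $e$ into $\tilde e$ and $\tilde\eta_1$ and this $\kappa\ge g$ step are absent from your argument. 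Second, anchoring the derivative observer at $\xi_2^\star:=\dot\eta_d^\star$ is problematic, because the resulting $\tilde\xi_2$ dynamics contain $\tfrac{d}{dt}\dot\eta_d^\star$, a quantity none of the standing assumptions bounds uniformly; the paper instead uses $\xi_2^\star:=\psi(\eta^\star,\theta^\star)$, whose time derivative is exactly the bounded function $\dot\psi$ of \eqref{dotpsi} evaluated on $H^\star\times S^\star$, so the mismatch is controlled through \eqref{cond:dotpsi} and shows up only as an extra $\varepsilon^\star$-sized term. With the shifted coordinates, the two-stage small-gain argument (first $\tilde\eta,\tilde\xi,\tilde\varsigma$ against $\tilde e$, then $\tilde\chi,\tilde\zeta,\tilde x_0$ closing the loop) goes through as in the appendix; as written, your proposal skips the one idea that makes it work.
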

Theorem \ref{MainTheorem} is proved in the Appendix. Its claim is an \emph{approximate} regulation result, which becomes \emph{asymptotic} whenever $\vep\sr(t,\theta\sr(t))=0$. This, in turn, happens when a ``real'' model exists and belongs to the chosen model set $\cM$. As Assumption A\ref{ass:psiLipschitz} and the identifier requirement imply that $\vep\sr$ can be bounded uniformly in the control parameters,  the claim of the theorem is also a \emph{practical} regulation result, with the bound on the regulation error that can be reduced arbitrarily by increasing $g$.   
  Finally, we remark that, if a ``saturated version'' of $\psi$ is implemented in the internal model unit \eqref{NLinternalModel} in place of $\psi$ (for instance by saturating $\psi$ on $H\sr\x\gamma(S\sr)$ in the same way as it is done in \eqref{dotpsi} for $\dot\psi$), and if $(\dot\eta_d\sr,\eta\sr)$ is bounded uniformly in the control parameters (which is always true in the square case as remarked in Section \ref{sec:designIM}), then a \emph{practical} regulation result is still preserved\footnote{This can  be deduced by the proof of Theorem \ref{MainTheorem} by neglecting the identifier's dynamics and by noticing that, in \eqref{pf:teta}, $\tilde\psi(\tilde\eta,\tilde\vsig,\eta\sr,\vsig\sr)-\vep\sr= \psi(\eta,\gamma(\vsig))-\dot\eta\sr_d$  can be bounded uniformly in $\vsig$.} also in the case in which $(\dot\eta\sr_d,\eta\sr)\notin\cH\sr$, thus paralleling the ``canonical'' pre-processing results (see e.g.  \cite{Byrnes2004,Isidori2012adapt}). In this case, however, the asymptotic bound on $e(t)$ cannot be related to $\vep\sr$ any more. 	
\section{Continuous-Time Least Squares Identifiers}\label{sec:leastsquares} 
We develop here an example of a pair $(\vhi,\gamma)$ that fulfils the identifier requirement when the model $\psi(\cdot,\theta)$ is a finite {\em linear combination of known functions} of the form\footnote{For ease of exposition we present here the case in which $\de=1$, with the remark that an identifier of the same kind for $\de>1$ can be always obtained as the composition of $\de$ single-variable identifiers. } 
\begin{equation}\label{ls:phi}
\psi(\cdot,\theta) = \sum_{i=1}^{\dth} \theta_{i} \sigma_{i}(\cdot)\,, 
\end{equation}
in which $\dth\in\N$ is arbitrary and $\sigma_{i}:\R^{d}\to\R$ are known Lipschitz and bounded functions. In this case the model set $\cM$ is the family of functions of the form $\sigma(\cdot)^\top \theta$, having defined $\sigma(\cdot):=\col(\sigma_1(\cdot),\dots,\sigma_{\dth}(\cdot))$ and $\theta:=\col(\theta_1,\dots,\theta_{\dth})$.   
We associate with $\cM$ the following cost functional, obtained  by letting in \eqref{def:cJ} $c_\varepsilon(t,s,\cdot) := \lambda \exp(-\lambda(t-s))|\,\cdot\,|^2$ and $c_r(\theta):=\theta^\top \Gamma\theta$, with $\lambda>0$ and  $\Gamma\in\R^{\dth\x \dth}$ symmetric and positive semi-definite
\begin{equation}\label{ls:J}
\begin{aligned}
\cJ_{(\dot\eta\sr_d,\eta^\star)}(t)(\theta) &= \lambda\int_0^t e^{-\lambda(t-s)} \big|\varepsilon^\star(s,\theta)\big|^2ds  + \theta^\top \Gamma\theta 
\end{aligned}
\end{equation}
in which the prediction error \eqref{d:varepsilon_star} at time $s$ reads as
\begin{equation*}
\varepsilon^\star(s,\theta):=\dot \eta^\star_d(s)- \sigma(\eta^\star(s))^\top \theta\,.
\end{equation*}
The optimisation problem associated with \eqref{ls:J} is recognised to be a (weighted) least squares problem with regularisation, in which $\lambda$ and $\Gamma$ play the role of the forgetting factor and the regulariser respectively. Namely, except for the regularisation term, minimising \eqref{ls:J} means minimising a weighted squared ``norm'' of the prediction errors associated with all the past data. 

With $\SP_{\dth}$ the space of symmetric positive semi-definite matrices in $\R^{\dth\x \dth}$, we let $\cZ_\varsigma:=\SP_{\dth}\x\R^{\dth}$ and, by partitioning the state as $\varsigma=(\vsig_1,\vsig_2)$, with $\varsigma_1\in\SP_{\dth}$ and $\varsigma_2\in\R^{\dth}$, we equip $\cZ_\varsigma$ with the norm $|\vsig|:=|\vsig_1|+|\vsig_2|$. We thus construct a pair $(\vhi,\gamma)$ satisfying the identifier requirement relative to \eqref{ls:J} as follows
\begin{equation}\label{ls:sys}
\begin{aligned}
&\begin{array}{lcl}
\dot \varsigma_1 &=& -\lambda \varsigma_1 + \lambda \sigma(\eta)\sigma(\eta)^\top \\
\dot \varsigma_2 &=& -\lambda \varsigma_2 + \lambda\sigma(\eta) \xi_2 \\
\theta &=& (\varsigma_1+\Gamma)\inv \varsigma_2,
\end{array} & \vsig\in\cZ_\varsigma 
\end{aligned}
\end{equation}
The claim is formalized by the following proposition.
\begin{proposition}\label{prop:ls} 
	With $c>0$ arbitrary, let $\cH\sr$ be a class of locally integrable functions $(\dot\eta\sr_d,\eta\sr):\Rplus\to\R\x\R^{d}$ satisfying $|(\dot\eta\sr_d,\eta\sr)|_\infty\le c$. Then, if $\Gamma>0$, the pair $(\vhi,\gamma)$ constructed in \eqref{ls:sys} satisfies the identifier requirement locally\footnote{The word "locally" in the claim of the proposition refers to the fact that $\gamma$ is only proved to be locally Lipschitz, and not globally Lipschitz as requested by the identifier requirement. Nevertheless, we remark that a globally Lipschitz $\gamma$ can be simply obtained by saturating the expression in (20) on the compact set in which $\varsigma$ is supposed to range, the latter that can be inferred by the knowledge of the bound $c$ on $(\dot\eta_d^\star,\eta^\star)$, as specified in the proof. Details are omitted for reason of space.} relative to $\cH\sr$ and the least-squares functional (\ref{ls:phi}) with $\beta_\varsigma(s,t)=s\exp(-\lambda t)$.  
\end{proposition}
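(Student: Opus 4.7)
The plan is to verify, one by one, the four parts of the identifier requirement for the linear least-squares system \eqref{ls:sys}: (i) the regularity of $(\varphi,\gamma)$; (ii) the identification of the steady-state trajectory $\varsigma^\star$ together with its compact invariant set $S^\star$; (iii) the input-to-state bound with $\beta_\varsigma(s,t)=s\exp(-\lambda t)$; and (iv) the pointwise optimality of $\theta^\star(t)=\gamma(\varsigma^\star(t))$.

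Step (i) is routine: $\varphi$ is affine in $\varsigma$ with coefficients $\lambda\sigma(\eta)\sigma(\eta)^\top$ and $\lambda\sigma(\eta)\xi_2$, which are Lipschitz in $(\xi_2,\eta)$ by the hypotheses on $\sigma$, so $\varphi$ is locally Lipschitz. For $\gamma(\varsigma)=(\varsigma_1+\Gamma)^{-1}\varsigma_2$, the restriction $\varsigma_1\in\SP_{\dth}$ combined with $\Gamma>0$ yields $\varsigma_1+\Gamma\ge\Gamma>0$ uniformly, so matrix inversion is smooth on $\cZ_\varsigma$; hence $\gamma$ is smooth, in particular locally Lipschitz and differentiable with locally Lipschitz derivative. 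Linear growth of $\gamma$ in $\varsigma_2$ is what prevents global Lipschitzness, explaining the word ``locally'' in the statement.

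For step (ii) I would pick $\varsigma^\star$ to be the unique solution of \eqref{ls:sys} driven by the true pair $(\xi_2,\eta)=(\dot\eta_d^\star,\eta^\star)$ and initialised at $\varsigma^\star(0)=0$. Variation of constants then gives the explicit expressions
\[
\begin{aligned}
\varsigma_1^\star(t)&=\lambda\int_0^{t}e^{-\lambda(t-s)}\sigma(\eta^\star(s))\sigma(\eta^\star(s))^\top ds,\\
\varsigma_2^\star(t)&=\lambda\int_0^{t}e^{-\lambda(t-s)}\sigma(\eta^\star(s))\dot\eta_d^\star(s)\,ds,
\end{aligned}
\]
which lie in $\SP_{\dth}\times\R^{\dth}$ for all $t$. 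Boundedness of $\sigma$ by some $\bar\sigma$ combined with $|\dot\eta_d^\star|\le c$ yields uniform bounds $|\varsigma_1^\star(t)|\le\bar\sigma^2$ and $|\varsigma_2^\star(t)|\le\bar\sigma c$, producing the required compact $S^\star$. For step (iv) I would expand the squared prediction error in \eqref{ls:J}, recognise the two integrals above as $\varsigma_1^\star$ and $\varsigma_2^\star$, and obtain the quadratic form
\[
\cJ_{(\dot\eta_d^\star,\eta^\star)}(t)(\theta)=c_0(t)-2\varsigma_2^\star(t)^\top\theta+\theta^\top(\varsigma_1^\star(t)+\Gamma)\theta,
\]
whose Hessian $2(\varsigma_1^\star(t)+\Gamma)\ge 2\Gamma>0$ is strictly positive definite; the unique minimiser is therefore $(\varsigma_1^\star(t)+\Gamma)^{-1}\varsigma_2^\star(t)=\gamma(\varsigma^\star(t))$, as required.

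The most delicate step is (iii). Letting $\tilde\varsigma:=\varsigma-\varsigma^\star$ when \eqref{ls:sys} is driven by $(\dot\eta_d^\star+\delta',\eta^\star+\delta'')$, the linearity of the dynamics yields $\dot{\tilde\varsigma}=-\lambda\tilde\varsigma+\tilde u(t)$, where the residual $\tilde u$ aggregates the differences $\lambda[\sigma(\eta^\star+\delta'')\sigma(\eta^\star+\delta'')^\top-\sigma(\eta^\star)\sigma(\eta^\star)^\top]$ and $\lambda[\sigma(\eta^\star+\delta'')(\dot\eta_d^\star+\delta')-\sigma(\eta^\star)\dot\eta_d^\star]$. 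Expanding and simultaneously exploiting the Lipschitz continuity \emph{and} boundedness of $\sigma$ plus the a priori bound $|\dot\eta_d^\star|\le c$ shows that each such difference is bounded by a constant times $|(\delta'(t),\delta''(t))|$, giving $|\tilde u(t)|\le L|(\delta'(t),\delta''(t))|$ for some $L>0$. A standard comparison argument applied to $\dot V\le -\lambda V+|\tilde u|$ with $V:=|\tilde\varsigma|$ then yields $|\tilde\varsigma(t)|\le e^{-\lambda t}|\tilde\varsigma(0)|+(L/\lambda)|(\delta',\delta'')|_{[0,t)}$, which is the claimed bound with $\beta_\varsigma(s,t)=s e^{-\lambda t}$ and $\alpha_\varsigma=L/\lambda$; completeness of maximal solutions is automatic from the linear-plus-bounded structure. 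The only subtle point is the residual estimate just mentioned, where both the Lipschitz and the boundedness hypotheses on $\sigma$ are essential; everything else reduces to direct linear ODE manipulations.
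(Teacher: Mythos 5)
Your proposal is correct and follows essentially the same route as the paper's proof: the same variation-of-constants expressions for $\varsigma_1^\star,\varsigma_2^\star$, the same use of Lipschitzness plus boundedness of $\sigma$ (and the bound $c$) to get the exponential ISS estimate with decay $\lambda$, and the same identification of $\gamma(\varsigma^\star(t))$ as the minimiser of the quadratic cost (the paper checks stationarity of the gradient, you additionally note strict convexity via the Hessian $2(\varsigma_1^\star+\Gamma)\ge 2\Gamma>0$, which is a harmless strengthening). The only point the paper makes explicit that you leave implicit is that $\sigma(\eta)\sigma(\eta)^\top\in\SP_{\dth}$ renders $\SP_{\dth}$ forward invariant for $\varsigma_1$, which is what keeps solutions in $\cZ_\varsigma$ and keeps $\varsigma_1+\Gamma$ uniformly positive definite along trajectories.
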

\begin{proof} 
As $\sigma$ is Lipschitz and bounded, then $\vhi(\varsigma,\xi_2,\eta):=(-\lambda\vsig_1+\lambda\sigma(\eta)\sigma(\eta)^\top,\, -\lambda\vsig_2+\lambda\sigma(\eta)\xi_2)$ is locally Lipschitz.	  Pick an eigenvalue $\epsilon(t)$ of $\vsig_1(t)+\Gamma$, and let $v(t)\ne 0$ be a corresponding eigenvector. Then $v(t)^\top(\vsig_1(t)+\Gamma)v(t)=\epsilon(t) |v(t)|^2$, and since $\Gamma>0$ and $\vsig_1(t)\in\SP_{\dth}$, this implies $\epsilon(t)\ge p$, with $p>0$ the smallest eigenvalue of $\Gamma$. Thus $\vsig_1+\Gamma$ is invertible and the singular values of $(\vsig_1+\Gamma)\inv$ are bounded by $p\inv$, which implies that $\gamma(\vsig):=(\vsig_1+\Gamma)\inv \vsig_2$ is locally Lipschitz, smooth in $\varsigma$ and,  as a consequence,  its derivative is locally Lipschitz.

Pick now $\xi_2=\dot\eta\sr_d+\delta'$ and $\eta=\eta\sr+\delta''$, with $(\dot\eta\sr_d,\eta\sr)\in\cH\sr$ and $(\delta',\delta'')$ locally integrable. Forward completeness follows by noticing that \eqref{ls:sys} is a stable linear system driven by the locally integrable input $(\sigma(\eta)\sigma(\eta)^\top,\sigma(\eta)\xi_2)$ and that, as $\sigma(\eta)\sigma(\eta)^\top\in\SP_{\dth}$, then $\SP_{\dth}$ is forward invariant for $\varsigma_1$. With $\Sigma(\eta^\star,\dst''):= \sigma(\eta^\star+\dst'')\sigma(\eta^\star+\dst'')^\top$ and $\pi(\eta^\star, \dot \eta_d^\star,  \dst', \dst'') := \sigma(\eta^\star+\dst'')(\dot \eta_d^\star+\dst')$, 
 define 
\begin{equation*}
\begin{aligned}
\varsigma_1^\star(t) &:= \lambda \int_0^t e^{-\lambda(t-s)} \Sigma(\eta^\star(s),0) ds\\
\varsigma_2^\star(t) &:= \lambda\int_0^t e^{-\lambda(t-s)}\pi(\eta^\star(s),\dot \eta_d^\star(s),0,0)ds,
\end{aligned}
\end{equation*}
and let $\varsigma^\star=(\varsigma_1^\star,\varsigma_2^\star)$.
If $|(\dot\eta\sr_d,\eta\sr)|\le c$ for some $c>0$, then clearly there exists $c'>0$ such that $\varsigma\sr(t)\in S\sr:=\{ \vsig\in\cZ_\vsig\st |\vsig|\le c' \}$. Furthermore, since  $\sigma$ is  Lipschitz and bounded, there exists $l_\sigma>0$ (possibly depending on $c$) such that $|\Sigma(\eta^\star,\dst'')-\Sigma(\eta^\star,0)|  \le  l_\sigma |\dst''|$ and $|\pi(\eta^\star,\dot \eta_d^\star, \dst', \dst'')-\pi(\eta^\star,\dot \eta^\star_d, 0,0 )|  \le   l_\sigma |(\dst', \dst'')|$   for all $(\dst', \dst '') \in\R \times \R^{d}$. 
Hence, by integration of (\ref{ls:sys}), and using $\varsigma_1^\star(0)=0$, we obtain
\begin{equation*}
|\varsigma_1(t)-\varsigma_1^\star(t)|  \le e^{-\lambda t}|\varsigma_1(0)-\varsigma_1^\star(0)| +  { l_\sigma}  |(\delta',\delta'')|_{[0,t)},
\end{equation*}
and a similar bound holds for $|\varsigma_2(t)-\varsigma_2^\star(t)|$, thus implying the first item of the identifier requirement  with $\beta_\varsigma(s,t)=s \exp(-\lambda t)$ and with $\alpha_\varsigma=2l_\sigma$. 
For fixed $t\in\Rplus$, differentiating \eqref{ls:J} with respect to $\theta$ yields
$\jacobian_\theta  \cJ_{(\dot\eta\sr_d,\eta^\star)}(t)(\theta) =2( (\varsigma^\star_1(t)+\Gamma)\theta - \varsigma^\star_2(t))$. 
Since, $\vartheta^\circ_{(\dot\eta\sr_d,\eta^\star)}(t):=\{ \theta\in\R^{n_\theta}\st\jacobian_\theta \cJ_{(\dot\eta\sr_d,\eta^\star)}(t)(\theta)=0\}$, then $\theta^\star (t) = (\varsigma_1^\star(t)+\Gamma)\inv \varsigma_2^\star(t)\in \vartheta^\circ_{(\dot\eta\sr_d,\eta^\star)}(t)$, which is the second item of the requirement, thus concluding the proof. 
\end{proof} 

We observe that the  regularisation matrix $\Gamma>0$ plays a fundamental role in Proposition \ref{prop:ls}, as it ensures that $\vsig_1+\Gamma$ is uniformly nonsingular. However, its presence frustrates the possibility of having asymptotic regulation also when the ``right'' internal model  belongs to the model set \eqref{ls:phi}. As evident in \eqref{ls:J}, indeed, having $\Gamma\ge 0$ means that, even if $\theta$ annihilates the prediction error $\varepsilon\sr$, and thus the integral term of \eqref{ls:J}, it also produces a positive addend $\theta^\top\Gamma\theta$, thus possibly making such $\theta$   a non-stationary point of $\cJ_{(\dot\eta\sr_d,\eta^\star)}(t)$. In this case, $\theta$ approaches a neighbourhood of $\theta\sr(t)$ of a size that depends on the maximum eigenvalue of $\Gamma$ that, however, can be taken as small as desired.
Nevertheless, $\Gamma$ can be  chosen positive semi-definite (and possibly zero). In this case, \eqref{sys:identifier} can still be used by substituting the inverse operator with a pseudo-inverse (indeed $\sigma_1+\Gamma$ needs not be invertible in this case), and the claim of Proposition \ref{prop:ls} applies only if the minimum non-zero singular value of $\varsigma_1+\Gamma$ is bounded away from zero uniformly in $t$, which can be seen as a \emph{persistence of excitation condition}. We also remark that, in this case, the Lipschitz constant of $\gamma$ and its derivative  becomes dependent on how large is the minimum non-zero singular value of $\varsigma_1+\Gamma$, thus making the result of Theorem \ref{MainTheorem} obtained for a certain value of the gains $\rho$, $g$, $\kappa$ and $\ell$, applicable only to the solutions carrying sufficient excitation.

\section{Example: Control of the VTOL}\label{sec:VTOL} 
Consider the lateral $(\xx_1,\xx_2)$ and angular $(\xx_3,\xx_4)$ dynamics of a VTOL aircraft described by \cite{Isidori2003Book}
\begin{equation*}
\begin{aligned}
&\begin{array}{lcl}
\dot \xx_1 &=&\xx_2\\
\dot \xx_2 &=& {\rm d}(w) -  \gravity\tan \xx_3  + v \end{array}& &\begin{array}{lcl} 
\dot \xx_3&=& \xx_4\\
\dot \xx_4 &=& B u
\end{array}
\end{aligned}
\end{equation*}
with  $\gravity>0$ the gravitational constant and $B=2LJ\inv>0$, with $L>0$ the length of the wings and $J$ the moment of inertia (typically uncertain).  The input $u$ is the force on the wingtips, $v$ is a vanishing input taking into account the (controlled) vertical dynamics (not considered here) and ${\rm d}(w):=M\inv {\rm d}_0(w)$, with ${\rm d}_0(w)$  the lateral wind force disturbance, and $M>0$ the VTOL mass. The control goal is to eliminate the wind action from the lateral position dynamics, i.e. the regulation error is defined as $e(t)=\xx_1(t)$. We also suppose to have available for feedback the entire state, namely $y=\xx$. Let $w$ be generated by an exosystem of the form \eqref{fmk:sys:exo} and change variables as $\xx \mapsto x:=(\chi,\zeta)$, with $\chi:=(\xx_1,\,\xx_2,\,-\gravity\tan \xx_3+{\rm d}(w))$ and $\zeta:= L_s{\rm d}(w) - \gravity \xx_4/ (\cos \xx_3)^2$. In the new coordinates the following equations hold 
\begin{equation*}
\begin{aligned}
\begin{array}{lcl}
\dot \chi_1 &=&\chi_2\\
\dot\chi_2 &=& \chi_3  
\end{array}&&\begin{array}{lcl}
\dot\chi_3 &=& \zeta\\
\dot\zeta &=& q(w,x) + \Omega(w,x) u ,
\end{array}
\end{aligned}
\end{equation*}
in which\footnote{Recall that $\cos(\tan\inv(s))=1/\sqrt{s^2+1}$.} $\Omega(w,x):=-\gravity B/(\cos(\tan\inv({\rm d}(w)-\chi_3)/\gravity)))^2$ and $q(w,x)$  properly defined. This system is in the form (\ref{normal_form}), with A\ref{ass:regulationEq} trivially fulfilled ($x_0$ being absent) by $x\sr=0$ and $u\sr=\gravity B\inv (L_s^2 {\rm d}(w)-2 {\rm d(w)}^2 L_s {\rm d(w)} )/({\rm d(w)}^2+\gravity^2)$,  and A\ref{ass:stabilisability} fulfilled on each compact set with ${\cal L}$ a negative number\footnote{In this respect, we observe that the ideal steady-state value of the measurements $(\xx_3,\xx_4)$ is given by $(\xx_3\sr,\xx_4\sr):=(\tan\inv({\rm d}(w)/\gravity),\gravity L_s{\rm d}(w)/({\rm d}(w)^2+1))$, and thus $y$ is not in general vanishing at the steady state.}.
With $(c_1, c_2,c_3)$ the coefficients of a Hurwitz polynomial  and  $\kappa,\ell>0$ design parameters,  we fix the control law as 
\begin{align*}
u = -{\cal L} \big( &c_1 \ell \kappa ^3 (\xx_1+\eta_1) +c_2 \ell \kappa ^2 \xx_2  +c_3 \ell \kappa  (-\gravity\tan\xx_3) \\&\; + \ell (-\gravity \xx_4/\cos^2\xx_3) \big),
\end{align*}
with $\eta_1$ the first state of the internal model fixed later. In the new  coordinates $(\chi, \zeta)$, this control law is of the form (\ref{stabiliser}), with ${\cal K}_w= \ell(c_3\kappa\;\; 1)$ and $\nu(x^\star,w)=\col({\rm d}(w), \, L_s{\rm d}(w))$.

Regarding the design of the internal model unit, we observe that, by following Section \ref{sec:designIM}, $\Upsilon_{(\ell,\kappa)}(w) = Q(\ell, \kappa)  D(w)$,
in which  $Q(\ell, \kappa):=({c_3}/{(c_1\kappa^2)} \; {1}/{(c_1\kappa^3)} \;  {1}/{(c_1\ell {\cal L}\kappa^3)} )$ and $D(w) =\col({\rm d}(w), \, L_s{\rm d}(w), \, -\Omega(w,0)\inv q(w,0))$. Thus, $\eta_i^\star = Q(\ell, \kappa)  L_s^{i-1}D(w)$, $i=1,\ldots, d$, and $\dot \eta_d^\star = Q(\ell, \kappa)  L_s^{d}D(w)$. The form of $Q$ and the fact $\kappa$ and $\ell$ have large values   show that the dominant elements in $\eta^\star_i$ and $\dot \eta^\star_d$ are $L_s^{i-1} {\rm d}(w)$ and  $L_s^{d} {\rm d}(w)$, regardless the value of the dimension $d$ of $\eta$. Now, suppose that ${\rm d}(w)$ consists of a single harmonic at an unknown frequency. The design of $(d,\psi)$ and the identifier to reject ${\rm d}(w)$ is then carried out by considering a single oscillator as the model set, obtained with $d=2$, $\dth=2$, and $\psi(\eta,\theta):=\theta^\top \eta$. The adaptation phase, in turn, can be set up by using the least-squares identifier presented in Section \ref{sec:leastsquares} with $n_\theta=2$ and $\sigma$ any bounded function satisfying $\sigma(\eta)=\eta$ in the region where $\col({\rm d}(w),L_s{\rm d}(w))\cdot c_3/c_1\kappa^2$ is supposed to range.  
%

%
      
\section{{Conclusions}}\label{sec:conclusions} 
The paper presented a post-processing design procedure for a class of multivariable nonlinear systems stabilisable by high-gain feedback   hinging on a  ``non-equilibrium'' framework. The internal model is adaptive with the adaptation mechanisms cast as an identification problem and with the asymptotic regulation error that is directly related to the identification error.  The framework does not rely on an exact knowledge of the steady state friend, nor on an exact parametrisation of it. Rather, it assumes the knowledge of some qualitative\textbackslash{}quantitative information about the class of steady state signals used to choose the model set of the underlying identification problem.
 The  paper fits in the research direction of   \cite{DAstolfi2015,Bin2018c}  in which approximate, rather than asymptotic, regulation is envisioned as the right perspective in presence of general uncertainties. 

\appendices

\section{Proof of Theorem \ref{MainTheorem}}\label{sec:ProofPropNf}
With $\kappa>1$ and $\ell>1$, pick a solution $(x,\chi,\zeta,\eta,\varsigma,\xi)$ to the closed-loop system \eqref{normal_form}, \eqref{s:regulator} and let $(x^\star,u^\star,\eta^\star,\dot\eta_d^\star)$ be given by A\ref{ass:regulationEq} and   Section \ref{sec:designIM}. Assume that $(\dot\eta\sr_d,\eta\sr)\in\cH\sr$, and let $(\varsigma^\star,\theta^\star)$ be produced by the identifier requirement. Consider the following change of variables
\begin{equation*}
\begin{array}{lcllcl}
\eta  &\mapsto &  \tilde\eta   :=  \eta -\eta ^\star   &
\varsigma &\mapsto&  \tilde\varsigma  :=  \varsigma-\varsigma^\star\\
\chi  & \mapsto  & \tilde\chi   :=  \chi  + C^\top \tilde\eta_1 & \zeta&\mapsto &  \tilde\zeta  :=  \zeta- K(\kappa)\tilde\chi  \\
\xi&\mapsto & \tilde\xi := \xi-\begin{pmatrix}
 \eta_d\sr\\\psi(\eta\sr,\theta\sr) 
\end{pmatrix}&
e &\mapsto &  \tilde e   :=  e+\tilde\eta_1, 
\end{array}
\end{equation*}
where we recall that $K(\kappa)$ is defined in \eqref{stabiliser} and $\tilde\eta_1\in\R^\de$ represents the first $\de$ components of $\tilde\eta$. By definition of $\eta^\star$, $\dot\eta\sr_i=\eta\sr_{i+1}$, and in the new coordinates we obtain
\begin{equation}\label{pf:teta}
\begin{array}{lcl}
\dot{\tilde\eta}_i &=& \tilde\eta_{i+1} - h_ig^i \tilde\eta_1 + g^ih_i \tilde e, \qquad i=1,\dots,d-1\\
\dot{\tilde\eta}_d &=&-h_dg^d\tilde\eta_1 + \tilde\psi(\tilde\eta,\tilde\varsigma,\eta^\star,\varsigma\sr) +h_dg^d\tilde e - \varepsilon\sr.
\end{array}
\end{equation}
with $\vep\sr=\vep\sr(t,\theta\sr)$ given by \eqref{d:varepsilon_star} and with $\tilde\psi(\tilde\eta,\tilde\varsigma,\eta^\star,\varsigma\sr):=  \psi(\tilde\eta+\eta^\star,\gamma(\tilde\varsigma+\varsigma\sr)) - \psi(\eta\sr,\gamma(\varsigma\sr))$ that, since A\ref{ass:psiLipschitz} and the identifier requirement imply that $\psi$ and $\gamma$ are Lipschitz, fulfils 
$|\tilde\psi(\tilde\eta,\tilde\varsigma,\eta^\star,\varsigma\sr)|\le  c_{\psi,\gamma} |(\tilde\eta,\tilde\vsig)|$
for some $c_{\psi,\gamma}>0$ independent on the control parameters.
Thus, standard high-gain arguments (see e.g. \cite{KhalilNL}) show that there exist $a_0,a_1,a_2,a_3>0$ and $g^\star_0>0$ such that for all $g\ge g^\star_0$ the following bound holds
\begin{equation}\label{pf:bound_eta} 
\begin{aligned}
 |\tilde\eta_i(t)| &\le   a_0 g^{i-1} |\tilde \eta(0)|\e^{-a_1gt} +   a_2 g^{i-d-1}     |(\tilde\varsigma,\vep\sr)|_{[0,t)}\\&\qquad  +a_3 g^{i-1}  |\tilde e|_{[0,t)} 
\end{aligned}
\end{equation}
for all $t\in\Rplus$ and each $i=1,\dots,d$. Moreover, $\tilde\xi$ satisfies
\begin{equation*}
\begin{array}{lcl}
\dot{\tilde\xi}_1 &=& \tilde\xi_2 -m_1\rho\tilde\xi_1 + m_1\rho\tilde\eta_d-\vep\sr \\
\dot{\tilde\xi}_2&=& -m_2\rho^2\tilde\xi_1 + \tilde{\mu}(\tilde\eta,\tilde\varsigma,\tilde\xi_2,\eta\sr,\varsigma\sr) + m_2\rho^2\tilde\eta_d ,
\end{array}
\end{equation*}
in which, since by A\ref{ass:psiLipschitz}  $(\dot\eta\sr_d,\eta\sr)\in\cH\sr$ implies $(\dot\eta\sr_d(t),\eta\sr(t))\in H\sr$, and by the identifier requirement we have $\vsig\sr(t)\in  S\sr$,   in view of \eqref{dotpsi} $\tilde \mu$ reads as $\tilde\mu(\tilde\eta,\tilde\varsigma,\tilde\xi_2,\eta\sr,\varsigma\sr)  := \dot\psi(\tilde\xi_2+\psi(\eta\sr,\theta\sr),\tilde\eta+\eta\sr,\tilde\varsigma+\varsigma\sr) -\dot\psi(\dot\eta_d\sr,\eta\sr,\varsigma\sr)$.
 Moreover, in view of \eqref{cond:dotpsi}, there exists $l_\psi>0$ such that $|\tilde\mu(\tilde\eta,\tilde\varsigma,\tilde\xi_2,\eta\sr,\varsigma\sr)|\le l_\psi\big( |(\tilde\eta,\tilde\varsigma,\tilde\xi_2)| + |\vep\sr|  \big)$. 
 Hence,  customary high-gain arguments show that there exist $\rho \sr_0>1$ and $a_4,a_5,a_6>0$ such that, for all $\rho\ge\rho \sr_0$, the following  holds
\begin{equation}\label{pf:bound_xi}
\begin{aligned}
|\tilde\xi(t)|&\le  a_4\rho|\tilde\xi(0)|\e^{-a_5\rho t}  + a_6\left( \rho |\tilde\eta |_{[0,t)} +  \rho\inv |\tilde\varsigma|_{[0,t)}  +  |\vep\sr|_{[0,t)} \right).
\end{aligned}
\end{equation} 

We can write $\xi_2=\dot{\eta}_d\sr + \delta'$ and $\eta=\eta\sr+\delta''$, with $\delta':=\tilde\xi_2 -\vep\sr $ and $\delta'':=\tilde\eta$,
so that the identifier requirement yields
\begin{equation}\label{pf:bound_varsig}
|\tilde\varsigma(t)|\le \beta_\varsigma(|\tilde\varsigma(0)|,t) + \alpha_\varsigma|(\tilde\eta,\tilde\xi,\varepsilon\sr)|_{[0,t)}.
\end{equation}
In view of standard small-gain arguments (see e.g. \cite{Jiang1994}), the bounds \eqref{pf:bound_eta}, \eqref{pf:bound_xi}, \eqref{pf:bound_varsig} yield the existence of $\beta_1\in\cK\cL$, $a_7>0$, $\rho\sr\ge \rho\sr_0$ and  $g^\star(\rho)\ge g^\star_0$ such that, for all $\rho>\rho\sr$ and  $g\ge g^\star(\rho)$, we have
\begin{equation}\label{pf:bound_id}
\begin{aligned}
 &|(\tilde\eta(t),\tilde\varsigma(t),\tilde\xi(t))|\le  \beta_1(|(\tilde\eta(0),\tilde\varsigma(0),\tilde\xi(0))|,t)\\&\hspace{7em} + a_7\big(g^{d-1} |\tilde e|_{[0,t)} + |\vep\sr|_{[0,t)}\big)
 \\
&|\tilde\eta_i(t)|\le  \beta_1(|(\tilde\eta(0),\tilde\varsigma(0),\tilde\xi(0))|,t) \\&\hspace{7em}  + a_7  \big( g^{i-1}|\tilde e|_{[0,t)} + g^{i-1-d}|\vep\sr|_{[0,t)}\big).
\end{aligned}
\end{equation} 
By noticing that $\tilde e=C\tilde\chi$, differentiating $\tilde\chi$  yields
\begin{equation*}
\dot{\tilde \chi} = (F+HK(\kappa)+gh_1 C^\top C)\tilde \chi + H \tilde \zeta +  C^\top (\tilde\eta_2-gh_1\tilde\eta_1),
\end{equation*}
so that, in view of \eqref{d:K}, quite standard high-gain arguments (see e.g. \cite{IsidoriNCS2}) show that  there exists $\kappa^\star_0(g)>1$ such that, for all $\kappa>\kappa^\star_0(g)$ the following  hold 
\begin{equation}\label{pf:bound_chi}
\begin{aligned}
&|\tilde\chi(t)|\le a_9(\kappa) |\tilde\chi(0)|\e^{-a_{10}\kappa t} + \dfrac{a_{11}}{\kappa} |\tilde\zeta|_{[0,t)} \\&\qquad\qquad\qquad\qquad\quad+ a_{12}(\kappa)\big( g|\tilde\eta_1|_{[0,t)}+|\tilde\eta_2|_{[0,t)}\big)    \\
&|\tilde e(t)| \le a_{9}(\kappa) |\tilde\chi(0)|\e^{-a_{10}\kappa t} + \dfrac{a_{11}}{\kappa} |\tilde\zeta|_{[0,t)} \\&\qquad\qquad\qquad\qquad\quad+ \dfrac{a_{13}}{\kappa} \big( g|\tilde\eta_1|_{[0,t)}+|\tilde\eta_2|_{[0,t)}\big)   
\end{aligned}
\end{equation}
for some $a_9(\kappa),a_{10},a_{11},a_{12}(\kappa),a_{13}>0$.
Furthermore,   in the new coordinates, the control law \eqref{stabiliser} becomes $u = -\ell \cL \tilde\zeta - \cL \big( \Omega(w,x\sr)\cL\big)\inv q(w,x\sr)$, and differentiating $\tilde\zeta$ yields
%
\begin{equation}\label{pf:tilde_zeta}
\dot{\tilde{\zeta}} = \delta(\tilde\eta,\tilde\chi,\tilde\zeta) + \tilde \phi(w,x,x\sr) - \ell \Omega(w,x)\cL\tilde \zeta 
\end{equation}
with $\delta(\tilde\eta,\tilde\chi,\tilde\zeta)  := -K(\kappa) ((F+HK(\kappa)+gh_1C^\top C)\tilde\chi + H\tilde\zeta + C^\top (\tilde\eta_2-h_1g\tilde\eta_1) )$ that satisfies 
$
|\delta(\tilde\eta,\tilde\chi,\tilde\zeta)|\le \ a_{14}(\kappa,g) |(\tilde\eta,\tilde\chi,\tilde\zeta)| 
$,
for some $a_{14}(g,\kappa)>0$, and
with $\tilde \phi (w,x,x\sr):=\Omega(w,x)\cL( ( \Omega(w,x)\cL)\inv q(w,x) - (\Omega(w,x\sr)\cL)\inv q(w,x\sr)  )$ that,  in view of A\ref{ass:stabilisability} and since  $|\chi|\le|\tilde\chi|+|\tilde\eta_1|$ and $|\zeta|\le|\tilde\zeta|+|K(\kappa)\tilde\chi|$,  satisfies
$
|\tilde \phi(w,x,x\sr)| \le a_{15}(\kappa)|(\tilde x_0,\tilde\chi,\tilde\zeta,\tilde\eta_1)|,
$
for some $a_{15}(\kappa)>0$ and with $\tilde x_0:=x_0-x_0\sr$. Hence,  usual high-gain arguments show that, under A\ref{ass:stabilisability}, there exists an $\ell^\star_0(\kappa,g)>0$ such that, for all $\ell>\ell^\star_0(\kappa,g)$ the following bound holds
\begin{equation}\label{pf:bound_zeta}
\begin{aligned}
|\tilde\zeta(t)|\le &a_{16}|\tilde\zeta(0)|\e^{-a_{17}\ell t} + \dfrac{a_{18}(\kappa)}{\ell} |\tilde\chi|_{[0,t)}\\&+ \dfrac{a_{19}}{\ell}  |\tilde x_0|_{[0,t)} +
\dfrac{a_{20}(\kappa)}{\ell}\big(g|\tilde\eta_1|_{[0,t)} +|\tilde\eta_2|_{[0,t)}\big) 
\end{aligned}
\end{equation}
for some $a_{16},a_{17},a_{18}(\kappa),a_{19},a_{20}(\kappa)>0$.
Furthermore, by noticing that $|\chi|\le|\tilde\chi|+|\tilde\eta_1|$ and $|\zeta|\le|\tilde\zeta|+|K(\kappa)\tilde\chi|$, A\ref{ass:regulationEq} yields the existence of $b_2,b_3(\kappa)>0$ such that
\begin{equation}\label{pf:bound_x0}
|\tilde x_0(t)|\le \beta_0(|\tilde x_0(0)|,t) + b_2|(\tilde\eta_1,\tilde\zeta)|_{[0,t)} + b_3(\kappa)|\tilde\chi|_{[0,t)}.
\end{equation} 
Therefore, in view of \eqref{pf:bound_id}, \eqref{pf:bound_chi}, \eqref{pf:bound_zeta} and \eqref{pf:bound_x0}, repeating the small-gain arguments of \cite{Jiang1994} yields the existence of a $\kappa\sr(g)\ge \kappa_0\sr(g)$ and of an $\ell\sr(\kappa,g)\ge\ell_0\sr(\kappa,g)$ such that, for each $\rho>\rho\sr$, $g\ge g\sr(\rho)$, $\kappa\ge \kappa\sr(g)$, and $\ell\ge \ell\sr(g,\kappa)$, it holds that
\begin{align*}
&|(\tilde x(t),\tilde \eta(t),\tilde\varsigma(t),\tilde\xi(t))|\le \beta(|(\tilde x(0),\tilde \eta(0),\tilde\varsigma(0),\tilde\xi(0))|,t) \\&\qquad\qquad\qquad\qquad\qquad + p_1(\kappa,g) |\vep\sr|_{[0,t)}
\end{align*}
and
\begin{equation*}
\begin{aligned}
& \limsup_{t \to \infty}|\tilde e(t)|\le p_2 \kappa\inv g^{1-d} \limsup_{t \to \infty}|\vep\sr(t)| \\
& \limsup_{t \to \infty}|\tilde \eta_1(t)| \le p_3  g^{ -d} \limsup_{t \to \infty}|\vep\sr(t)| 
\end{aligned}
\end{equation*}
for some $\beta\in\cK\cL$ and $p_1(\kappa,g),p_2,p_3>0$, and the result follows with $c=p_2+p_3$ by noticing that $|e|\le |\tilde e| + |\tilde\eta_1|$ and that, since $\kappa\sr(g)$ can be taken to be larger than $g$, then $\kappa\inv\le g\inv$.


\bibliographystyle{IEEEtran}
\bibliography{IEEEabrv,biblio}

\end{document}